\theoremstyle{definition}
\theoremstyle{plain}
\newtheorem{lemma}{Lemma}[section]
\begin{document}

\title{On tests of the quantum nature of gravitational interactions in presence of non-linear corrections to quantum mechanics}

\author{Giovanni Spaventa}
\email{giovanni.spaventa@uni-ulm.de}
\affiliation{Institute of Theoretical Physics and IQST, Universität Ulm, Albert-Einstein-Allee 11 D-89081, Ulm, Germany}

\author{Ludovico Lami}
\email{ludovico.lami@gmail.com}
\affiliation{Institute of Theoretical Physics and IQST, Universität Ulm, Albert-Einstein-Allee 11 D-89081, Ulm, Germany}
\affiliation{QuSoft, Science Park 123, 1098 XG Amsterdam, the Netherlands}
\affiliation{Korteweg--de Vries Institute for Mathematics, University of Amsterdam, Science Park 105-107, 1098 XG Amsterdam, the Netherlands}
\affiliation{Institute for Theoretical Physics, University of Amsterdam, Science Park 904, 1098 XH Amsterdam, the Netherlands}

\author{Martin B. Plenio}
\email{martin.plenio@uni-ulm.de}
\affiliation{Institute of Theoretical Physics and IQST, Universität Ulm, Albert-Einstein-Allee 11 D-89081, Ulm, Germany}

\maketitle

\begin{abstract}
When two particles interact primarily through gravity and follow the laws of quantum mechanics, the generation of entanglement is considered a hallmark of the quantum nature of the gravitational interaction. However, we demonstrate that entanglement dynamics can also occur in the presence of a weak quantum interaction and non-linear corrections to local quantum mechanics, even if the gravitational interaction is classical or absent at short distances. This highlights the importance of going beyond entanglement detection to conclusively test the quantum character of gravity, and it requires a thorough examination of the strength of other quantum forces and potential non-linear corrections to quantum mechanics in the realm of large masses.
\end{abstract}

{\em Introduction ---} Does the gravitational field require quantization and if
so, how do we formulate the correct theory of quantum gravity? Despite intense 
research and debate, these key questions at the interface of quantum mechanics 
and general relativity, two theories that have revolutionised physics in the early
parts of the 20${}^{th}$ century, remain open to this day. These issues have been
hotly discussed already at the 1957 Chapel Hill Conference on the Role of Gravitation 
in Physics~\cite{Feynman57}. At the time not only did Richard Feynman state that ``we're in trouble if 
we believe in quantum mechanics but {\em don't} quantize gravitational theory''; he also %but 
supported this assessment with a \emph{Gedankenexperiment}. He considered a massive 
particle to be placed in a coherent superposition of its spatial degrees of freedom 
by first preparing an internal spin degree-of-freedom of this particle in coherent 
superposition and allowing this particle to pass through a Stern-Gerlach apparatus. 
Then, crucially, he assumed this particle to interact purely gravitationally with yet 
another massive particle.
Indeed, treating the gravitational interaction as either classical or quantum mechanical
results in very different quantum states of the two particles and thus experimental results
of subsequent measurements. The particles would emerge with correlated positions that are 
described either by an incoherent mixture or by a coherent superposition, respectively. The latter,
in modern quantum information parlance referred to as an entangled state, let Feynman to 
conclude that ``We would then have to analyze through the channel
provided by the gravitational field itself via the quantum mechanical amplitudes. Therefore,
there must be an amplitude for the gravitational field''. 

Hence, as discussed further in~\cite{LindnerP2005,KafriT13,KafriTM14,KrisnandaTP2017} the 
observation of the generation of entanglement between two massive particles
would certify the quantum mechanical character of their gravitational interaction.
If this interaction is assumed to be mediated by a gravitational field and the local 
dynamics follows the laws of quantum mechanics, then this would falsify
the assumption of a classical force carrier and thereby imply the non-classical 
nature of the gravitational field \cite{MarlettoV17,MarshmanMB20,GalleyGS21,ChristodoulouBA+22,Carney22}.
Note that in this argument the existence of a gravitational field is taken 
as an assumption that is not directly verified by the experiment \cite{FragkosKP22}.

Nevertheless, one must exercise caution when embarking upon an argument of this 
nature based on experimental data as the strength of the conclusion is inherently 
tied to the validation of the assumptions that underlie the very foundation of the
logical inference.
%However, care needs to be taken when pursuing this type of argument based on experimental data because the conclusion is only as strong as the verification of the validity of the assumptions on which the logical inference is built. 
Indeed, recently models have been proposed in which, even when starting from 
product states, entanglement may be generated with semiclassical models of 
gravity. These models include hybrid quantum-classical ensembles~\cite{HallR18} 
and variations of Bohmian quantum mechanics~\cite{DonerG22}. Both models 
violate fundamental assumptions that enter the proofs that a classical 
force carrier cannot create entanglement and their validity would have to
be established experimentally.   

Here we add another facet to this discussion with models where quantum mechanics 
exhibits weak non-linear corrections to the local\footnote{By local we mean 
that the non-linearity is added as a perturbation to the \textit{free dynamics} of 
each particle, as opposed to adding new interaction terms.} dynamics of massive particles. 
Non-linearities may, for example, emerge in models that couple classical 
gravity to quantum dynamics~\cite{Kibble78,KibbleR80} --- the alternative model 
that we wish to exclude experimentally with the observation of entanglement. 
In the absence of any quantum mechanical interaction the models that we will
consider map product states to product states. However, in any realistic experiment 
there will be residual, albeit very weak, quantum interactions (e.g., electromagnetic 
forces --- Casimir forces, dipolar interactions, etc.)~\cite{PedernalesMP+20}. 
As we will proceed to show, under such models the entangling effect of these quantum 
forces -- even when they are much weaker in strength than gravity at the same distance -- 
can be amplified by non-linear corrections to quantum mechanics. This, in turn, can lead to an entanglement evolution closely resembling that expected
under the influence of quantum mechanical gravity.

As a result, the observation of entanglement can be held to witness the quantum
mechanical nature of the gravitational interaction without loopholes only if 
additional measurements limit the strength of both, non-gravitational forces
{\em and} possible non-linear corrections to the local quantum dynamics at the
length, time and mass scales in question. We discuss possible tests of such non-linear 
corrections to quantum mechanics. 

Non-linear corrections to quantum mechanics may, at first sight, appear a 
daring assumption for a variety of reasons. First, it has been demonstrated 
that specific examples of non-linear quantum mechanics that are, for example,
consistent with the framework put forward by Weinberg~\cite{Weinberg89,Weinberg89a},
allow for signalling~\cite{Gisin89,Czachor91,Polchinski91}, an observation 
that would place considerable strain on notions of causality in physics. It
should be pointed out, however, that proposals have been put forward that 
enable non-linear quantum evolutions to become consistent with Minkowski 
causality~\cite{Kent05,rembielinski2020nonlinear}. Secondly, one may argue that in atomic physics 
the presence of non-linear corrections has been the subject of stringent 
experimental tests that have resulted in tight bounds on possible non-linearities 
in specific models~\cite{BollingerHI+89}. However, such tests have been
carried out only on the spin degrees of freedom of the constituents of atomic 
nuclei.
Under reasonable experimental parameters, tests of the quantum
character of the gravitational interaction require particles
that contain of order $10^{12}$ or more nucleons. This puts them 
significantly beyond the mass range where exhaustive examinations 
of non-linear corrections have been conducted up to this point.
%and are therefore far removed from the mass range in which stringent 
%test of non-linear corrections have been carried out so far. 
It is perfectly conceivable that non-linear corrections to quantum mechanics
scale with mass or energy of the test-masses. This could lead to considerable 
enhancements of non-linear effects for a test-mass with $10^{12}$ or more 
nucleons over those experienced in experiments with single atoms that have
been tested so far~\cite{BollingerHI+89}. 
As a result, the existence of such non-linearities for massive particles 
retains the status of a plausible hypothesis, especially since, as we stressed earlier, such non-linear corrections 
to quantum mechanics appear in
theoretical models that couple classical gravity to 
quantum mechanical matter~\cite{Kibble78,KibbleR80}, the very alternative that 
the observation of entanglement generation
is aiming to exclude.

This motivates the following examination of the consequences of non-linear corrections 
to quantum mechanics in the context of experimental tests of the quantum character of 
gravity in interferometry with massive particles. This work represents a proof of principle, i.e., we construct a particular model where nonlinearities boost entanglement generation, while a general analysis of the effect is postponed to future work.\\

{\em Outline --} The paper is organised as follows: in the first section we introduce Weinberg's framework for non-linear extensions to quantum mechanics. We then proceed with a brief description of typical experimental set-ups in tests of the quantum character of gravitational interactions, and introduce a specific class of non-linear models that are used throughout the paper. After computing the entanglement dynamics and %discuss 
considering the implications of these results for experimental tests, we conclude by discussing how to detect non-linearities in these models via frequency measurements.\\

{\em Weinberg's framework --} There is no unique manner in which to incorporate non-linear terms into quantum mechanics. In this section we introduce an elegant (and rather general) framework due to Weinberg that encompasses a wide variety of possible non-linear extensions to quantum mechanics~\cite{Weinberg89,Weinberg89a}. It was introduced with the express purpose in mind of providing non-linear extensions of quantum mechanics that can then be subjected to experimental test~\cite{BollingerHI+89}.

In standard quantum mechanics, we associate with every physical system
a Hilbert space $\mathcal{H}$, whose elements $\ket{\psi}$ represent its states. On the other hand, physical observables are associated with Hermitian operators $A:\mathcal{H}\to\mathcal{H}$ or, equivalently, with bilinear functions $a:\mathcal{H}\times\mathcal{H}^*\to \mathbb{R}$ such that $a(\ket{\psi},\bra{\psi})=\braket{\psi|A|\psi}$. The time evolution is then generated by the Hamiltonian $H$ via the Schr\"odinger equation $\frac{d}{dt}\ket{\psi}=-i H\ket{\psi}$. Furthermore, there exists a simple prescription for composing (non-interacting) subsystems $A,B$ with Hamiltonians $H_A,H_B$. In particular, the Hilbert space of $A+B$ is constructed as the tensor product $\mathcal{H}=\mathcal{H}_A\otimes \mathcal{H}_B$ of the Hilbert spaces of the two subsystems, and the Hamiltonian $H_{AB}=H_A\otimes\mathbb{1}+\mathbb{1}\otimes H_B$ correctly induces a time evolution on the composite system $A+B$ which is just the free evolution of $A$ under $H_A$ and the free evolution of $B$ under $H_B$, as one expects. Note that, by  
introducing local bases $\{\ket{k}\}_{k=1}^N$ and $\{\ket{j}\}_{j=1}^N$ for $A$ and $B$ respectively, any bipartite state $\ket{\Psi}$ of the composite system can be written as $\ket{\Psi}=\sum_j \ket{\psi_j}\otimes\ket{j}=\sum_k \ket{k}\otimes\ket{\phi_k}$, uniquely defining the states $\ket{\psi_j}$ and $\ket{\phi_k}$. This means that the Hamiltonian composition rule can be equivalently phrased in terms of its associated bilinear function as
\begin{equation}
\begin{split}
      & H_{AB}(\ket{\Psi},\bra{\Psi})=\bra{\Psi}H_A\otimes\mathbb{1}+\mathbb{1}\otimes H_B\ket{\Psi} \\
      & =\sum_j\bra{\psi_j}H_A\ket{\psi_j}+\sum_k \bra{\phi_k}H_B\ket{\phi_k}\,.
\end{split}
\end{equation}
Weinberg's theory, on the other hand, preserves the linear structure of the state space while relaxing one of the main assumption of standard quantum mechanics, i.e., here the existence of Hermitian linear operators representing physical observables. The states of a physical system are still represented by vectors $\ket{\psi}$ in a Hilbert space $\mathcal{H}$, but the bilinear functions $a(\ket{\psi},\bra{\psi})=\bra{\psi} A \ket{\psi}$ associated to physical observables are now replaced by generic (i.e., not necessarily bilinear) functions $a(\ket{\psi},\bra{\psi})$, with the only requirement being their homogeneity of degree one in both entries, i.e., $a(z\ket{\psi},\bra{\psi})=a(\ket{\psi},\bra{\psi}z)=za(\ket{\psi},\bra{\psi})$ $\forall z\in\mathbb{C}$. The reason behind this assumption is related to the requirement that the state $z|\psi\rangle$ be physically equivalent to $|\psi\rangle$ for any complex number $z$. As per the time evolution, it is generated by the Hamiltonian function $h(\ket{\psi},\bra{\psi})$ via the generalized Schr\"odinger equation
\begin{equation}
    \frac{d}{dt}\ket{\psi}=-i\frac{\partial h}{\partial \bra{\psi}}\,,
\end{equation}
which trivially reduces to the usual Schr\"odinger equation in the case of a bilinear $h$, i.e., when $h=\braket{\psi|H|\psi}$ for some Hermitian operator $H$. The framework also generalises the usual prescription for composing subsystems $A,B$ with Hamiltonian functions $h_A(\ket{\psi},\bra{\psi}),h_B(\ket{\phi},\bra{\phi})$. As in the previous case, we introduce local bases $\{\ket{k}\}_{k=1}^N$ and $\{\ket{j}\}_{j=1}^N$ for $A$ and $B$ respectively, which again uniquely define the states $\ket{\psi_j}$ and $\ket{\phi_k}$ through the decompositions of an arbitrary composite state $\ket{\Psi}$. Finally, we can extend the Hamiltonian functions defined on $A$ and $B$ to the composite system $A+B$ via
\begin{equation}
\begin{split}
    & h_{AB}(\ket{\Psi},\bra{\Psi})=\\ &\sum_j h_A(\ket{\psi_j},\bra{\psi_j})+\sum_k h_B(\ket{\phi_k},\bra{\phi_k})     
\end{split}
\end{equation}
This prescription ensures that each of the two subsystems will evolve under the action of its free Hamiltonian alone, which is what we expect from two non-interacting systems. Furthermore, it guarantees that the dynamics generated by $h_{AB}$ map product states to product states. In other words, when two non-interacting systems are initialized in a product state, their non-linear local evolution will always coincide with each subsystem's free dynamics, and the global dynamics will never entangle them spontaneously. If we consider bilinear Hamiltonian functions of $A$ and $B$, i.e., $h_{A,B}=\bra{\psi}H_{A,B}\ket{\psi}$ for some Hermitian operators $H_A,H_B$, the prescription above reduces to the %linear composition of subsystems via direct sum, i.e.,
standard composition rule $H_{AB}=H_A\otimes\mathbb{1}+\mathbb{1}\otimes H_B$.\\

{\em The typical experimental set-up --} A possible test of the quantum character of the gravitational
interaction that probes its capacity to generate entanglement prepares two test-masses at a distance 
$r$ each in a state with a high degree of delocalisation $\Delta x$. 
This can be achieved with the preparation of a Schr{\"o}dinger cat state \cite{SchmoeleDH+16,BoseMM+17,PedernalesMP+20,CarneyMT21,StreltsovPP22,PedernalesSP22}
or of a squeezed state of motion~\cite{KrisnandaTP20,CoscoPP21,WeissRT+21}. In what follows we adopt the 
former as it allows us to reduce the mathematical description in the limit of large delocalisation to that
of a two-dimensional Hilbert space which, in turn, allows for a straightforward application of Weinbergs 
formalism. 

\begin{figure}[h!]
    \centering
    \includegraphics[width=0.49\textwidth]{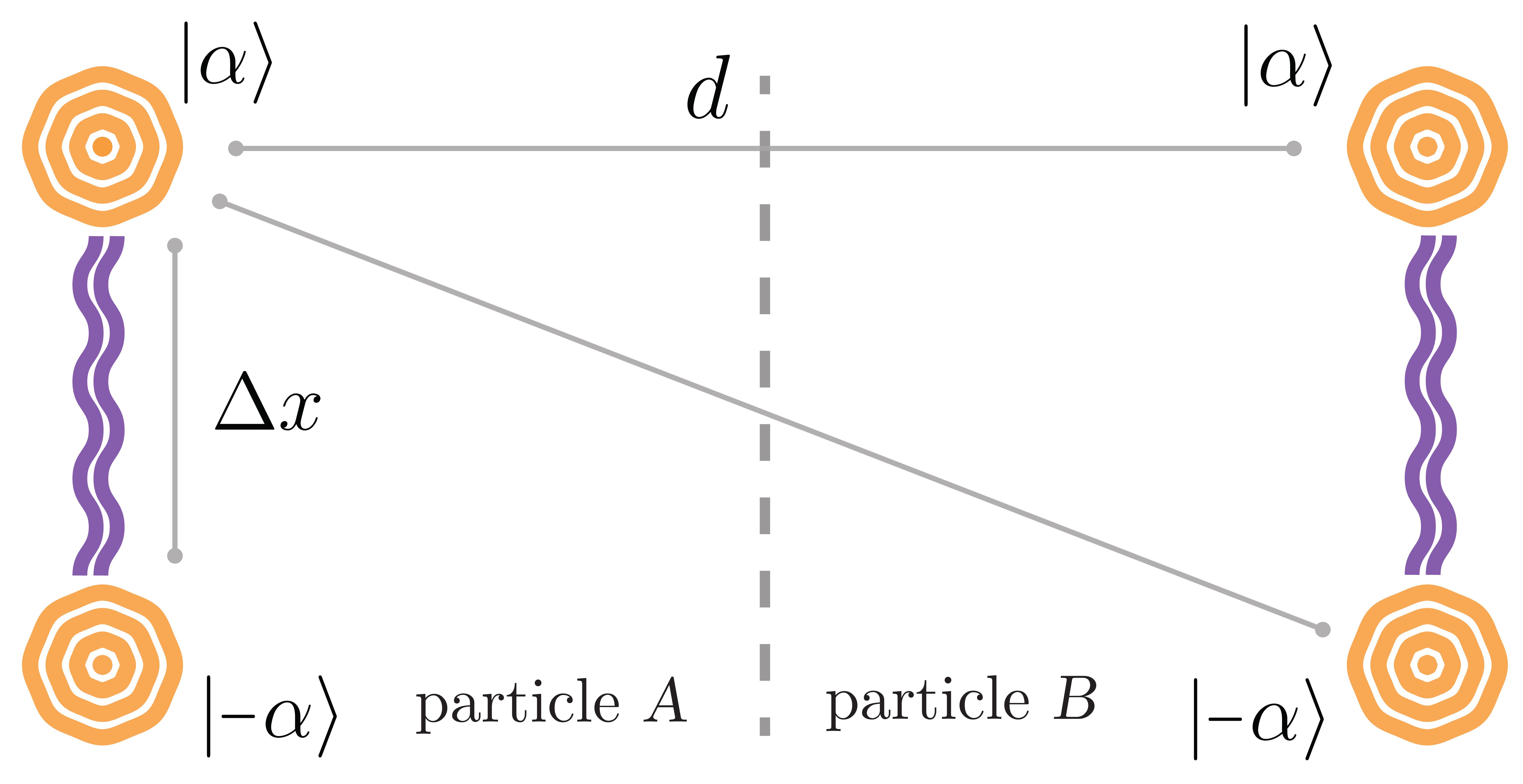}
    \caption{Two particles separated by a distance $d$ are prepared in Schr\"odinger cat states $\mathcal{N}(\ket{\alpha} + \ket{-\alpha})$ with normalisation constant $\mathcal{N}$, such that the spatial separation between the two components $\ket{\alpha}$ and $\ket{-\alpha}$ is given by $\Delta x$. A distance-dependent interaction between the particles will entangle them over time, by accumulating a relative phase between the components $\ket{\pm\alpha}\otimes\ket{\mp\alpha}$ and $\ket{\pm\alpha}\otimes\ket{\pm\alpha}$,.}
    \label{fig:setup}
\end{figure}

Hence, we have in mind the situation depicted in Fig.~\ref{fig:setup} 
where the wave-function of each test-mass is prepared in a Schr\"odinger 
cat state $\mathcal{N}(\ket{\alpha}+\ket{-\alpha})$ with $|\alpha\rangle = e^{-|\alpha|^2/2}\sum_{n=0}^{\infty} \frac{\alpha^n}{\sqrt{n!}}|n\rangle$, normalisation 
constant $\mathcal{N}$ and $|\alpha|$ large enough so
that the overlap between $\ket{\alpha}$ and $\ket{-\alpha}$, given by $|\braket{\alpha|-\alpha}|
= e^{-2|\alpha|^2}$, is negligible. Furthermore, we assume that the 
spatial dynamics of the coherent states is negligible, e.g., because the 
particles are in freefall and the experiment is sufficiently short to 
ensure that the dominant effect is the build-up of a relative phase. 
This allows us to introduce a qubit representation for each test-mass, 
i.e., $\ket{0}\equiv\ket{\alpha}$, $\ket{1}\equiv\ket{-\alpha}$.
% with $\langle 0|1\rangle=0$. 
The initial state is then $\ket{\psi(0)}=(\ket{0}+\ket{1})\otimes(\ket{0}+\ket{1})/2$. Under quantum mechanics, by the symmetry of the set-up and irrespective 
of the precise origin of the force, a distance-dependent interaction 
between the two test-masses will evolve $\ket{\psi(0)}$ into $\ket{\psi(t)}=\ket{00}+e^{ig t}(\ket{01}+\ket{10})+\ket{11}$, i.e., the 
components $\ket{01}$ and $\ket{10}$ will accumulate a phase relative to 
$\ket{00}$ and $\ket{11}$. This dynamics is described by an interaction 
term of the form $gH_{\rm int}=g\left(\ket{01}\bra{01} + \ket{10}\bra{10}\right)$, 
where $g$ determines the coupling strength. We now add a non-linear 
perturbation of the form
\begin{equation}
    h^a_Y(\ket{\psi},\bra{\psi})=\frac{\braket{\psi|\sigma^a_y|\psi}^2}{\braket{\psi|\psi}}\,,
    \label{non-linear}
\end{equation}
to the {\em local} quantum 
dynamics of each subsystem, where $\sigma^a_y=i\ket{0}\bra{1}-i\ket{1}\bra{0}$ acts on subsystem $a$, with $a=1,2$. The two perturbations $h^1_Y,h^2_Y$ can then be composed into a global Hamiltonian function $h_Y$ on the bipartite system $1+2$, as mandated
in Weinberg's approach~\cite{Weinberg89,Weinberg89a}. We stress that this prescription ensures that the dynamics due 
to $h_Y$ maps product states to product states. Then the total Hamiltonian 
function reads
\begin{equation}
    h(\ket{\psi},\bra{\psi})= g\braket{\psi|H_{\rm int}|\psi} + \epsilon\, h_Y(\ket{\psi},\bra{\psi})\,.
\end{equation}
It is important to stress that, in order to obtain the numerical results presented in this paper no explicit composition of the Hamiltonian functions $h_Y^1,h_Y^2$ has been computed. 
%Indeed, the composed function $h_Y$ is defined as precisely the one that induces the correct free evolution on the two subsystems, and this can be exploited as follows. 
Instead, we are making use of the Trotter decomposition and choose a time step $dt$ small enough, such that we can simulate the global evolution from $t$ to $t+dt$ via the factorization of the global dynamics as a composition of: a) free nonlinear evolution of A under its associated Hamiltonian function $h_Y^1$, b) free nonlinear evolution of B under its associated Hamiltonian function $h_Y^2$, and c) linear evolution of $A$ and $B$ under the interaction term $H_{\rm int}$. Thus, in the numerical simulation we only use local nonlinear evolutions and we can avoid computing $h_Y$ explicitly.\\
Naturally, the linear part of the Hamiltonian function, $H_{\rm int}$, alone will 
lead to an oscillatory entanglement dynamics with angular frequency $g$. In
what follows we will demonstrate, perhaps surprisingly, that the non-linear 
contribution to the local Hamiltonian, $\epsilon h_Y$, can enhance the
rate of oscillation of the entanglement between the two test-masses as compared
to that obtained under standard quantum mechanics. 

As we are dealing with noise-free pure state dynamics, we quantify the entanglement 
of the evolved state $\ket{\psi(t)}$ by the von Neumann entropy of the reduced density 
matrix $\rho_A(t) =\text{tr}_B[\ket{\psi(t)}\bra{\psi(t)}]$ of one of the test-masses, 
i.e., $E(\ket{\psi(t)})=-\text{tr}[\rho_A(t) \log(\rho_A(t))]$~\cite{Bennett-distillation, PlenioV07}. 
Let us denote by $E_{(g,\epsilon)}(t)$ the entanglement dynamics obtained 
for a coupling strength $g$ of the linear part  $H_{\rm int}$ and a strength 
$\epsilon$ of the non-linear correction $h_Y$. We chose the entanglement 
dynamics $E_{(1,0)}(t)$ corresponding to linear quantum mechanics with 
unit coupling strength $g$ as the reference. As one observes from the inset
of Fig.~\ref{fig:epsofg}, with increasing $\epsilon$ the non-linear correction 
$\epsilon h_Y$ is able to boost the oscillation frequency rate of the entanglement 
dynamics between the two test-masses while maintaining the same shape with maximal contrast. 
It is then natural to ask whether for any $g<1$ there is a 
choice $\epsilon = \epsilon^*(g)$ such that the frequency of oscillation of 
entanglement of $E_{(g,\epsilon^*(g))}$ matches that of our reference $E_{(1,0)}(t)$. 
As it proves challenging to obtain analytical expressions for $\epsilon^*(g)$, 
Fig.~\ref{fig:epsofg} shows the numerically determined relationship which 
is excellently fitted by
\begin{equation}
    \epsilon^{*}(g) = \frac{1-g}{a_1 g - a_2 g^2} = \frac{1-g}{4.587 g - 4.299 g^2}\,,
    \label{epsilong}
\end{equation}
an invertible function with inverse $g^*(\epsilon).$

\begin{figure}[h!]
    \centering
    \includegraphics[width=0.5\textwidth]{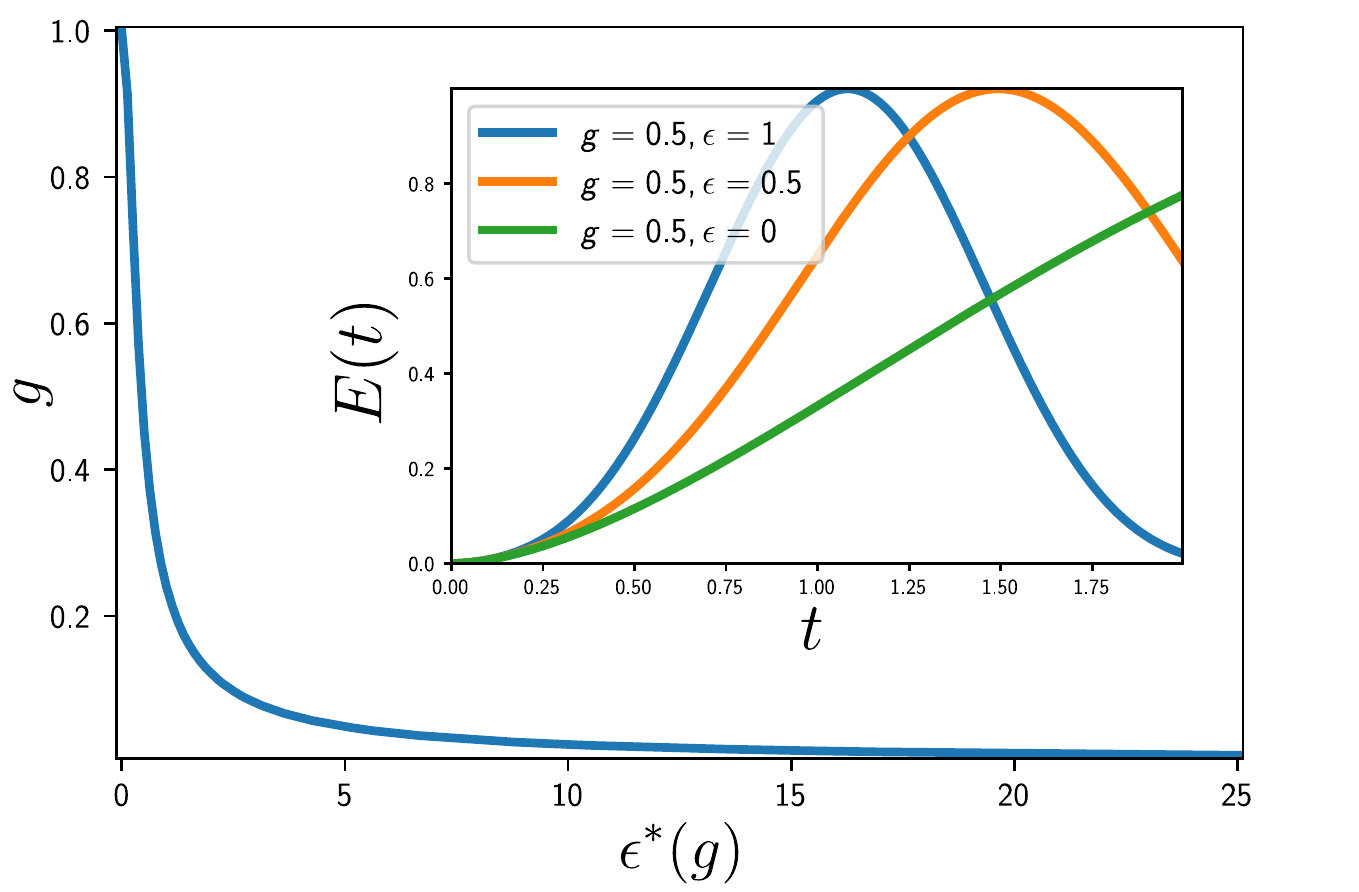}
    \caption{Relation between $g$ and the value $\epsilon^*$ that maximises the overlap with the curve corresponding to $g=1,\epsilon=0$. Equivalently, given a certain non-linear strength $\epsilon$, the value $g^*(\epsilon)$ is the strength of the residual interaction that would reproduce the entanglement dynamics associated with a linear model with $g=1$. (Inset): Entanglement of the composite state as a function of time for three different values of $\epsilon$. The non-linear effects are responsible for boosting the rate at which entanglement is generated.}
    \label{fig:epsofg}
\end{figure}

This relationship shows that, in principle, the rate of oscillation of the 
entanglement can be amplified arbitrarily by the presence of the non-linear 
correction to the local dynamics while maintaining a functional profile closely
following that of the linear case $E_{(1,0)}(t)$. This statement can be made
more quantitative by computing how well the non-linear entanglement dynamics
$E_{(g,\epsilon^*)}(t)$ approximates $E_{(1,0)}(t)$. We measure this with the mean 
squared deviation
\begin{equation}
    d(\epsilon) = \frac{1}{T}\int_0^T dt\, |E_{(g^*(\epsilon),\epsilon)}(t)-E_{(1,0)}(t)|^2\,,
\end{equation}
and plot the results in Fig. \ref{fig:dofg}.

\begin{figure}[h!]
    \centering
    \includegraphics[width=0.5\textwidth]{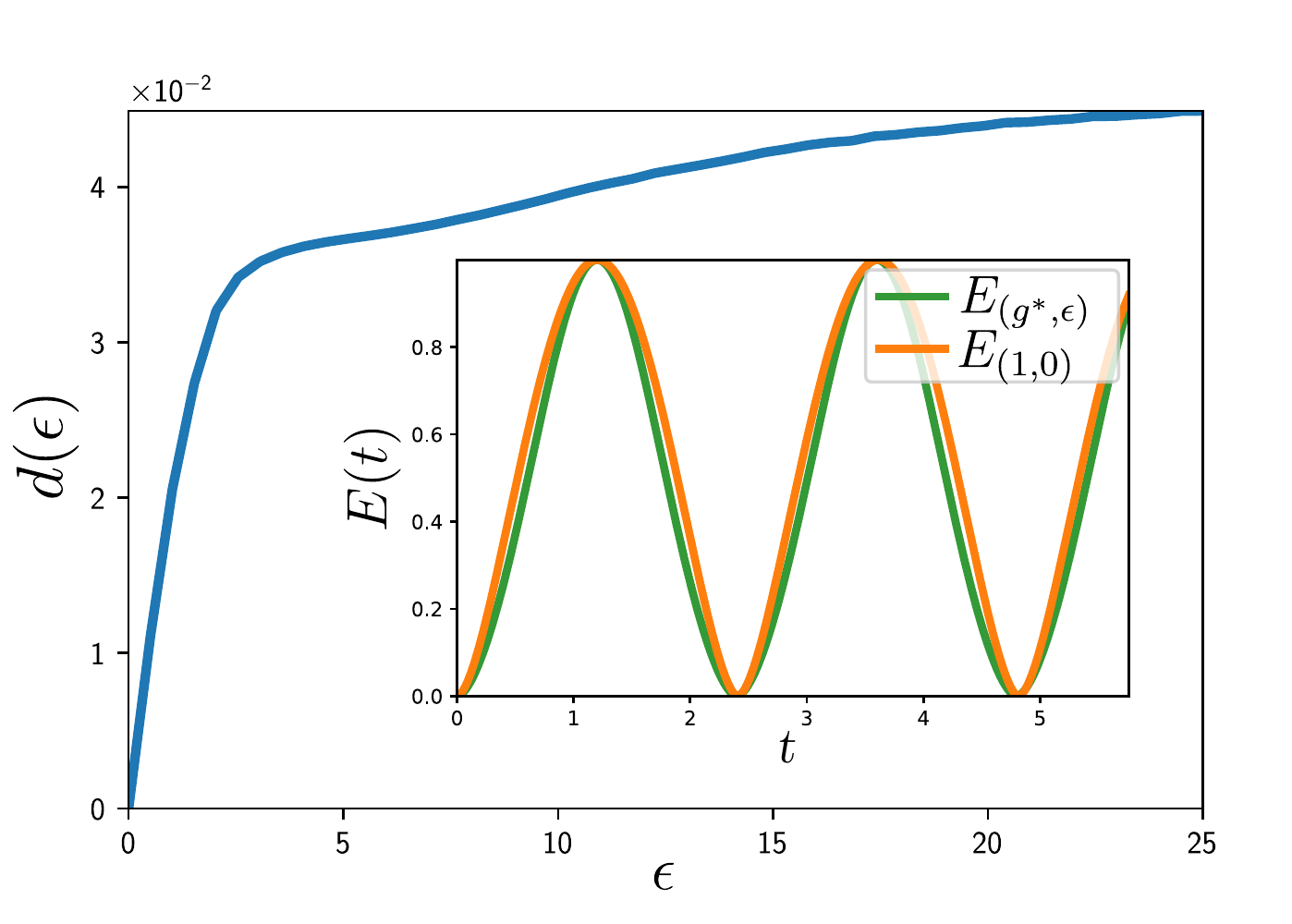}
    \caption{Mean square deviation $d(\epsilon)$ between the entanglement dynamics $E_{(1,0)}(t)$ corresponding to a linear theory with unit coupling, and the one corresponding to a non-linear theory with non-linear strength $\epsilon$ and corresponding coupling $g^*(\epsilon)$. (Inset): An explicit example in which the curve $E_{(1,0)}(t)$ is reproduced by a non-linear theory with $\epsilon=0.3$}
    \label{fig:dofg}
\end{figure}

The expression in Eq.~\eqref{epsilong} allows us to discuss the implications of these findings for realistic experimental parameters. To this end we compare the gravitational force between the two oscillators with the Casimir force between them\footnote{Here we ignore many orders of magnitude stronger interactions resulting from static electric dipole moments of the testmasses~\cite{PedernalesMP+20}.}. 
For two dielectric spherical particles of mass $M_1$ and $M_2$ at distance $r$ the gravitational interaction will 
lead to the accumulation of a relative phase at the rate
\begin{equation}
    \omega_g=\frac{G M_1M_2}{\hbar r}
\end{equation}
while the phase due to the Casimir force will accumulate at a rate \cite{CasimirPolder1948}
\begin{equation}
    \omega_c=\frac{23 c R_1^3 R_2^3}{4\pi r^7}\left(\frac{\epsilon_d - 1}{\epsilon_d + 2}\right)^2
\end{equation}
where $R_1$ and $R_2$ are the radii of the two particles, $\epsilon_d$ is the dielectric constant and $c$ is the speed of light. For definiteness, assume the case of a diamond which has $\epsilon_d=5.7$ with a mass density of $\rho = \SI{3.51e3}{\kg/\m^3}$. If we take $R_1 = R_2 = \SI{1.307}{\um}$, $r=\Delta x= \SI{200}{\um}$, for the arrangement in Fig.~\ref{fig:setup}, we find
$\ket{\psi(t)}=\ket{00}+e^{i\omega_{g/c} t}(\ket{01}+\ket{10})+\ket{11}$ with
$\omega_g=1$ Hz and $\omega_c=\SI{0.096}{Hz}$.
Normally, the latter would be considered negligible compared to the former. 
However, from Eq.~\eqref{epsilong} we find that a non-linear correction $h_Y$ 
with a strength of the order of $\epsilon\sim \SI{2.26}{Hz}$  will be sufficient to 
boost the frequency of oscillation of $E_{(\omega_c,\epsilon)}(t)$ purely due 
to Casimir interaction under a non-linear quantum mechanics to equal that 
of $E_{(\omega_g,0)}(t)$ expected from a quantum mechanical gravity and no 
Casimir force under linear quantum mechanics. \\

{\em Probing non-linear corrections to QM in a harmonic oscillator --}
As we have seen, the presence of non-linear extensions of quantum mechanics 
can have significant effects on the dynamics and hence on the conclusion that can be drawn
from such experiments. It is thus natural to examine whether non-linearities 
of the magnitude discussed above would already lead to observable
consequences in current or soon-realisable experiments on a single massive particle 
subject to a harmonic potential. As the preparation of highly non-classical states is 
very challenging experimentally, we consider a harmonic oscillator initially prepared 
in a thermal state. We displace the trap center suddenly in space such that relative 
to the new coordinates the harmonic oscillator is in a displaced thermal state whose 
mean position and momentum are expected to oscillate periodically in time. It is reasonable 
to expect that the oscillation frequency will be a function of the strength of the non-linear 
corrections to quantum mechanics. Such frequency changes will be detectable more easily for a 
harmonic oscillator that is cooled to the ground state which is what we assume for the following. 
This has the added benefit of simplifying the analytical treatments for important limiting cases.

For the experimental set-up shown in Fig.~\ref{fig:setup} we were able to neglect the 
spatial dynamics and thus assumed the coherent state amplitude $\alpha$ to be time-independent. 
Now, we are interested in the full spatial dynamics of the harmonic oscillator
including the non-linearities for arbitrary $\alpha$. For its mathematical description
we need to define three operators in such a manner that for large $\alpha$ they reduce 
to the Pauli-operators in Eq.~\eqref{non-linear} for the subspace spanned by $|\pm\alpha\rangle$. 
To this end we define the parity operator 
\begin{equation}
    P=e^{i\pi a^\dagger a},
\end{equation}
and the projector $\mathbb{P}_{\alpha}$ onto the subspace spanned by $|\pm\alpha\rangle$ which yields
by lengthy but direct calculation
\begin{equation}
    \| \mathbb{P}_{\alpha} P \mathbb{P}_{\alpha} - \sigma_x\|_1 = 2e^{-2|\alpha|^2}
\end{equation}
where $\|\cdot\|_1$ is the trace norm and $\sigma_x \coloneqq |\alpha\rangle\langle -\alpha| + |-\alpha\rangle\langle \alpha|$. 
Furthermore, using the displacement operator $D(\beta)=e^{\beta a^\dagger - \beta^* a}$, we define 
\begin{equation}
    Y = D^\dagger(\beta) P D(\beta)
    \label{Yop}
\end{equation}
where $\beta\in\mathbb{C}$ is for now a free parameter of the model. The requirement that the operator $Y$ defined above becomes a $\sigma_y$ in the two-dimensional subspace spanned by $\ket{\pm\alpha}$ when $|\alpha|\gg 1$ defines the parameter $\beta$ in terms of $\alpha$ as $\beta = -i\frac{\pi}{8\alpha}$, as for such value of $\beta$ one gets
\begin{equation}
    %|\mathbb{P}_{\alpha}Y\mathbb{P}_{\alpha} - \sigma_y|_1 \le \frac{\pi^2}{16|\alpha|^2}
    \|\mathbb{P}_{\alpha}Y\mathbb{P}_{\alpha} - \sigma_y\|_1 = 2\left(1-e^{-\frac{\pi^2}{32|\alpha|^2}}\right)\,,
\end{equation}
obtained by making use of the relation $D(\alpha+\beta) = D(\alpha)D(\beta)e^{-i \Im(\alpha\beta^*)}$ which
yields $D(-i\frac{\pi}{8\alpha})|\alpha\rangle = e^{i\pi/8}|\alpha - i\frac{\pi}{8\alpha}\rangle$. Thus
for large $\alpha$ the operator $Y$ reduces to the $\sigma_y$-operator in the subspace spanned by 
$|\pm\alpha\rangle$. In order to be as general as possible, we can study this problem by keeping $\beta$ as a free parameter, knowing that this model reduces to the qubit model studied earlier only when $\beta$ satisfies the constraint above.

In keeping with the mathematical description of the experiment in Fig.~\ref{fig:setup} we proceed to 
choose the non-linear correction to quantum mechanics as
\begin{equation}
    h_Y(\ket{\phi},\bra{\phi})=\frac{\braket{\phi|Y|\phi}^2}{\braket{\phi|\phi}}\,.
\end{equation}
The resulting Hamiltonian function is then
\begin{equation}
    h(\ket{\phi},\bra{\phi}) = \omega_0\braket{\phi|a^{\dagger}a|\phi} + \epsilon h_Y(\ket{\phi},\bra{\phi})\,,
\end{equation}
which yields the non-linear Schr\"odinger equation
\begin{equation}
    \frac{d\ket{\phi}}{dt} = -i \Big[ \omega_0 a^\dagger a + \epsilon\Big( 2 \braket{\phi|Y|\phi}Y - \braket{\phi|Y|\phi}^2\mathbb{1} \Big)\Big]\ket{\phi}\,.
    \label{eq:schreq}
\end{equation}
We proceed by solving the resulting dynamics by making use of the formalism of \textit{generalized coherent states} (GCS)~\cite{zhang1990coherent}. These are states whose evolution is described by classical Hamilton equations on a manifold that plays the role of a classical phase space. 
As presented in Appendix~\ref{appendix:GCS}, the GCS for this system are, to first order in $\epsilon$, of the form
\begin{equation}
    \ket{\alpha,\theta} 
    \coloneqq D(\alpha,\theta)\ket{0}\,,
\end{equation}
where 
\begin{equation}
    D(\alpha,\theta)=\Big(\mathbb{1}+\epsilon \Pi(\alpha,\theta)+ o(\epsilon^2)\Big)D(\alpha) 
\end{equation}
is a generalization of the usual displacement operator $D(\alpha)$, parametrized by the complex coordinate 
$\alpha$ and by a set of additional complex coordinates $\theta \coloneqq \{\theta_\mu|\mu\in\mathbb{N}\}$. 
The operator $\Pi(\alpha,\theta)$ can be computed explicitly and its expression is found in Appendix~\ref{appendix:GCS}. 
The states $\ket{\alpha,\theta}$ live on a very large manifold, and they coincide with the standard coherent 
states for $\theta_\mu=0$. 
The dynamics $\ket{\alpha(t),\theta(t)}$ on the manifold is specified by a set of Hamilton equations, as 
outlined in~\cite{zhang1990coherent}. In the limit of large amplitudes, $|\alpha|\gg 1$, we find all 
corrections due to the non-linearity $h_Y$ to be suppressed exponentially in $|\alpha|^2$ (as described 
in Appendix~\ref{appendix:GCS}) and the dynamics is extremely well approximated by a trajectory on the 
complex plane $(\Re(\alpha),\Im(\alpha))$ obeying $\dot{\alpha}(t) = -i\omega_0 \alpha(t)$. Thus an initial
state $\ket{\alpha,0}$ with $|\alpha|\gg 1$ will evolve as $\ket{\alpha(t),0}$ with 
$\alpha(t)=\alpha e^{-i\omega_0 t}$ (i.e., as a linear oscillator), irrespective of the size of the 
non-linear strength $\epsilon$.

\begin{figure}[h!]
    \centering
    \includegraphics[width=0.5\textwidth]{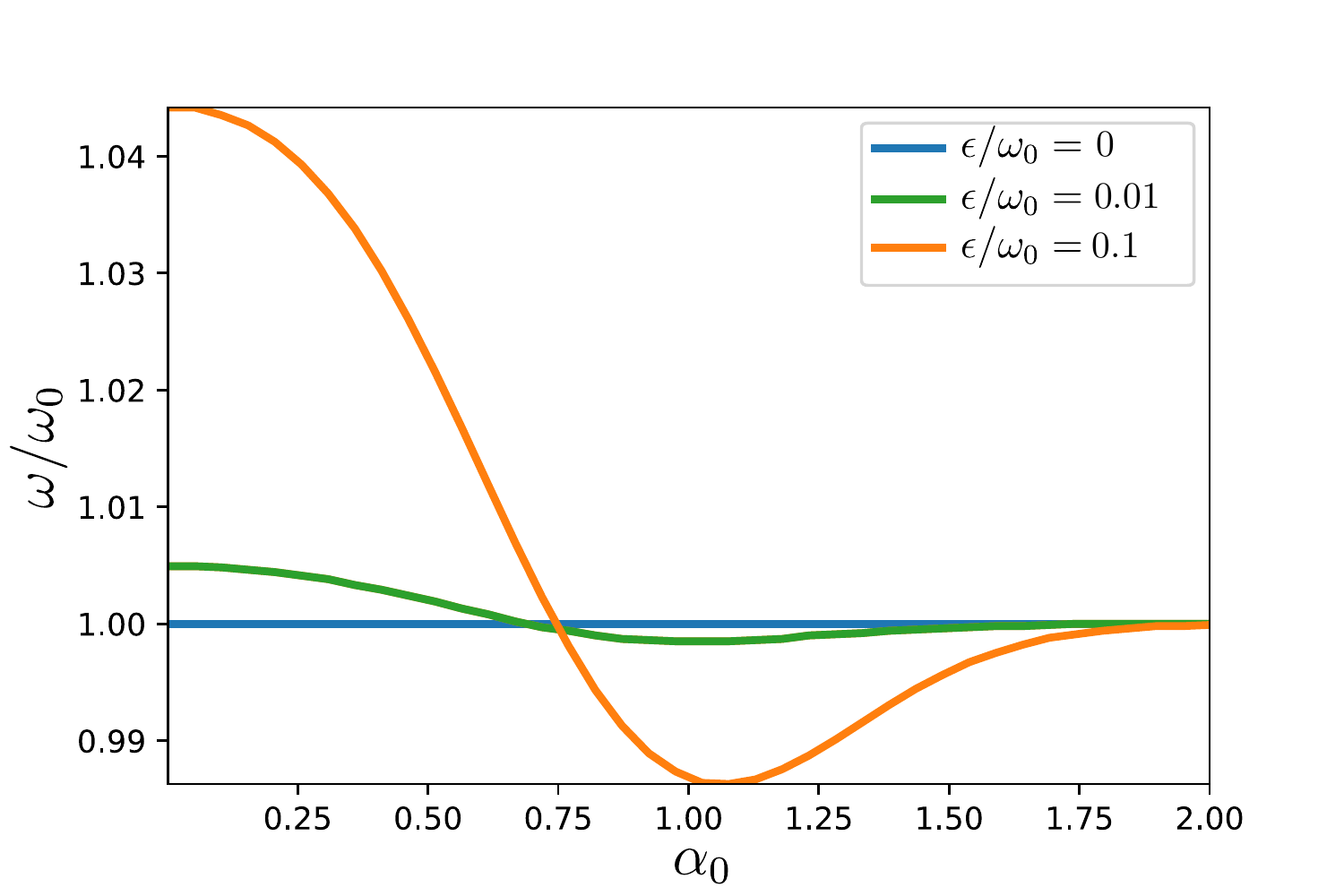}
    \caption{Deviations of the oscillation frequency of the system from the frequency $\omega_0$ of a linear oscillator. The corrected frequency is computed by initializing the system in the exact 
    vacuum (numerically determined) and then displacing it in position by an amount
    $\alpha_0$. The deviation from the frequency of the linear oscillator 
    scales approximately as $\epsilon/(2\omega_0)$ for small amplitudes, while for large amplitudes it is always negative and vanishes exponentially as $-4(\epsilon/\omega_0) e^{-4|\alpha_0|^2}$ (see Appendix~\ref{appendix:GCS} for details).} %Interestingly enough, there is no correction to the frequency for $\alpha_0>2$, regardless of $\epsilon$.
    \label{fig:osc_freq}
\end{figure}

In the regime of small displacement amplitudes, smaller or of the 
order of the spatial extent of the ground state wave function, we 
find in a first order amplitude expansion (see Appendix B) that the 
spatial dynamics still displays a periodic oscillatory behaviour, but 
with a frequency $\omega$ 
that deviates from that of the linear harmonic oscillator, 
$\omega_0$. In order to determine these frequency shifts, one needs to
account for the fact that the unperturbed ground state $\alpha=0$ does not 
approximate well the perturbed ground state (which is the state 
with the lowest energy expectation value). In leading order in the non-linearity 
an approximation to the ground state is given by coherent 
state with amplitude
\begin{equation}
    \alpha_v = 4\beta \frac{\epsilon}{\omega_0}  e^{-4|\beta|^2}\,.
\end{equation}
The construction of GCS for the non-linear oscillator allows for the determination
of the frequency shift (see Appendix~\ref{appendix:GCS}), which for $\beta$ as
large as $|\beta|=\pi/4$ gives 
\begin{equation}
    \frac{\delta\omega_0}{\omega_0} =
    4e^{-\frac{\pi^2}{4}}\left(1-\frac{\pi^2}{4}\right) \frac{\epsilon}{\omega_0}\cong
    \frac{\epsilon}{2\omega_0}\,.
\end{equation}
This is confirmed in Fig.~\ref{fig:osc_freq} where the numerically determined 
exact ground state has been displaced. The frequency shift is largest 
for very small displacement amplitudes $\alpha_0$. In this regime the
accurate determination of the small amplitude of oscillation will however
require a large number of measurements that scales approximately as 
$\alpha_0^{-2}$. Indeed, given a Gaussian state with a width $\sigma$, a 
position measurement will have a resulting variance $\Delta x^2 = \sigma^2$. 
A sequence of $N$ position measurements will reduce the variance by a factor 
of $N$, so that in order to appreciate a displacement $\alpha_0$ it must be 
$\Delta x^2/N=\alpha_0^2$, which implies $N=\sigma^2/\alpha_0^2$. For a thermal 
state with an average number of phonons $n_{\rm th}$, we have $\sigma\sim n_{\rm th}+1$ 
and therefore the number of required measurements will scale as $(n_{\rm th}+1)^2/\alpha_0^2$. 
Thus we need a large $\alpha_0$ in order to minimize the number of required measurements. 
On the other hand the deviation $\omega/\omega_0$ rapidly approaches unity for $\alpha_0>1$  
which rapidly diminishes the benefits of increasing $\alpha_0$. Thus, Fig.~\ref{fig:osc_freq} 
suggests that a displacement of the order of $\alpha_0=1$ provides a good balance between 
size of the frequency shift and amplitude of oscillation. 

As a result the experimental verification of non-linearities of order $\epsilon\sim 2$ Hz 
requires harmonic oscillators of very low
frequency $\omega_0$ whose initial state is 
cooled close to the ground state and displacement
amplitudes of order $\alpha_0\cong 1$. A careful design and analysis of such an experiment beyond 
these simple estimates is interesting but goes
beyond the scope of this work.\\

{\em Discussion --} We have demonstrated that the observation 
of entanglement generation between two massive particles, whose
interaction is dominated by gravity, witnesses its quantum mechanical
character only under additional assumptions that remain to be 
tested experimentally. Notably, the entangling capacity of very
weak quantum forces, much weaker than gravity at the same distance, 
may be amplified by non-linear corrections to local quantum dynamics 
predicted in certain models that couple classical gravity to quantum
matter and may closely mimic expected gravitational dynamics. It is worth mentioning that a complete theoretical analysis of the relation between entanglement and Weinberg nonlinearities is an interesting possibility for future work, as the current paper represents rather a \textit{proof of principle} than a general proof. We
also note that the specific model presented here ignores the effect of
noise that typically accompanies classical gravity coupling to 
quantum matter~\cite{KafriT13,KafriTM14,OppenheimSS+22}, has parameters
that are fine tuned and also suffers in its present formulation from 
the possibility of signalling. 
However, it is certainly possible to construct non-linear extensions to quantum mechanics that do not exhibit signalling. In particular, in ~\cite{Kent05} the author shows that any nonlinear model, where the nonlinearity arises as a dependence on local states, can be emulated by using standard quantum mechanics only, and thus cannot allow signalling. Another example is provided in \cite{rembielinski2020nonlinear} where the authors construct a general class of nonlinear models that do not allow signalling. These models have the property that their corresponding dynamical maps are \textit{convex quasilinear} as opposed to \textit{linear}. These finding suggest that other non-linear models might be found that still display the desired entanglement amplification, while not allowing for superluminal effects.
Nevertheless, this model demonstrates
that a conclusive test of the quantumness of gravitational interaction 
via the observation of entanglement generation
may eventually require, at the very least, additional tests that limit 
non-linear corrections to quantum mechanics. %Deeper
A deeper analysis may 
reveal additional assumptions that may need to be tested independently.

This situation is reminiscent of the long journey towards establishing
conclusive experimental evidence of the non-locality of nature by tests
of Bell inequalities. There, increasingly sophisticated experimental 
tests were devised and realised to close possible loopholes --- that is, 
establish the correctness of the assumption that are underlying the 
logical argument --- until only quantum mechanics and its inherent non-locality 
remained the only reasonable explanation of the observations~\cite{BrunnerCP+14}. 
We expect that experiments to witness quantum
properties of gravity, due to their indirect nature, will experience a similar development. \\

{\em Numerical simulations and figures ---} The data used for Figures~\ref{fig:epsofg},~\ref{fig:dofg}, and~\ref{fig:osc_freq} has been generated via Python codes available at the GitHub repository \url{https://github.com/gspaventa/nonlinear_oscillators} in the form of two Jupyter notebooks.\\

{\em Acknowledgements ---} We thank Julen Pedernales and Kirill Streltsov for critical reading and helpful comments on this manuscript and Susana Huelga for discussions at early stages of this work. This work was supported by the QuantERA projects Lemaqume and 
ExtraQt as well as the Alexander von Humboldt Foundation.

\clearpage
\onecolumngrid
\appendix

\section{Generalized coherent states for the non-linear oscillator}
\label{appendix:GCS}

The generalized coherent states (GCS) \cite{zhang1990coherent} framework provides a recipe for constructing coherent states for a wide class of physical systems. These are states whose evolution can be mapped into trajectories on a symplectic manifold. Furthermore, such trajectories are obtained as solutions of Hamilton equations as in classical mechanics. The simplest case being that of a standard harmonic oscillator with Hamiltonian $H=\omega a^\dagger a$, for which the GCS are the well-known coherent states $\ket{\alpha}$ with $\alpha\in \mathbb{C}$. Indeed, the time evolution of such states can be described as classical orbits $\alpha(t)\in \mathbb{C}$ induced by the classical Hamiltonian $H(\alpha,\alpha^*)=\omega |\alpha|^2$. In addition to being a fascinating topic in mathematical physics, the GCS framework can be applied to various contexts, such as the study of classical limits of quantum theories \cite{yaffe1982large}, or the the analysis of open quantum systems in terms of their parametric representations \cite{calvani2013parametric,Spaventa2022nature}. In this section we construct the GCS system for a single harmonic oscillator with a non-linear perturbation, and we study the impact of these non-linear corrections to its dynamics. Here, we find that even at first order in $\epsilon$ the GCS dynamics for this system is extremely convoluted. However, it is possible to get information about the dynamics from the regimes of very large and very small coherent state amplitudes. In particular, we show that the non-linear oscillator behaves as a linear one for large amplitudes, and we analyze the dynamics around the ground state to find the correction to the bare frequency $\omega_0$. The (non-linear) Hamiltonian that we have to analyze is
\begin{equation}
    H(\phi) := \omega_0 a^\dagger a + \epsilon\Big( 2 \bra{\phi}Y\ket{\phi}Y - \bra{\phi}Y\ket{\phi}^2\mathbb{1} \Big)\,,
\end{equation}
where $Y=D^\dagger(\beta)PD(\beta)$, and $P=e^{-\pi a^\dagger a}$. The expression above defines for any state $\ket{\phi}$ a linear combination of the operators $\{\mathbb{1},a,a^\dagger, a^\dagger a, \epsilon Y\}$. Given that the Heisenberg-Weyl algebra $\mathfrak{h}_4=\text{Span}(\mathbb{1},a,a^\dagger,a^\dagger a)$ of the quantum harmonic oscillator gives rise to the (standard) Glauber coherent states $\ket{\alpha}$, $\alpha\in\mathbb{C}$, we want to see whether it is possible to perturbatively construct the GCS for the non-linear system at hand, and write them as a correction to the unperturbed ones, to first order in $\epsilon$. The natural first step would be to include $\epsilon Y$ as a new generator in $\mathfrak{h}_4$ but, unfortunately, the operators $\{\mathbb{1}\,, a\,, a^\dagger\,, a^\dagger a\,, \epsilon Y \}$ do not form a closed algebra under commutation, since for example
\begin{equation}
    [Y,a]=2Y(\beta+a)\,,\quad [Y,a^\dagger] = 2(\beta -a^\dagger ) Y\,, \quad [Y, a^\dagger a] = 2\beta (a^\dagger Y + Y a)\,.
\end{equation}
This means that additional generators must be added in order to form a closed algebra. To this end, let us define for $j,k\geq0$ the operators
\begin{equation}
 Y_{jk}={a^\dagger}^j Y a^k\,,
\end{equation}
which obey the following commutation relations:
 \begin{equation}
 \begin{split}
    &[Y_{jk},a]=2\beta Y_{jk} + 2Y_{j,k+1}-jY_{j-1,k}\,,\\ &[Y_{jk},a^\dagger]=2\beta Y_{jk}-2Y_{j+1,k}+kY_{j,k-1}\,, \\ & [ Y_{jk},a^\dagger a]= 2\beta (Y_{j+1,k}+Y_{j,k+1})+(k-j)Y_{jk}\,.
    \label{eq:commutators}
\end{split}
\end{equation}   
As per the commutator $[Y_{jk},Y_{rs}]$, it can be written as a combination of operators $N_{jk}={a^\dagger}^j a^k$, thanks to the fact that $Y^2=\mathbb{1}$ and $[a,a^\dagger]=\mathbb{1}$. However, every time such a term is generated, it carries a coefficient $\epsilon^2$ in front of it and is thus negligible if we are interested in a first order treatment. Therefore, even though the algebra would only close with the additional generators $N_{jk}$, we can safely ignore this complication and consider, to first order in $\epsilon$
\begin{equation}
    [Y_{jk},Y_{rs}]\approx 0\,.
\end{equation}

The maximal isotropy subgroup $H$ (with respect to the unperturbed vacuum $\ket{0}$) is obtained by exponentiation of the subalgebra spanned by $\{\mathbb{1},a^\dagger a\}$, as usual. This means that the coset $G/H$ of displacement operators consists of elements of the form
\begin{equation}
    D(\alpha,\theta) = \exp\Big(\alpha a^\dagger - \alpha^* a  + \epsilon\sum_{\mu}(\theta^*_\mu Y_\mu - \theta_\mu Y_\mu^\dagger )\Big)
\end{equation}
where $\alpha$ is a complex number, $\{\theta_\mu\}$ is a (finite) sequence of complex numbers, and $Y_\mu:=Y_{\mu,0}$ are the only operators $Y_{jk}$ that do not annihilate the unperturbed vacuum $\ket{0}$. 
% Using the (right-oriented) Zassenhaus formula \cite{casas2012efficient}, and keeping only those terms that are of first order in $\epsilon$,  we can perform the following expansion
% \begin{equation}
%     e^{A+\epsilon B}=\left[\prod_{n=0}^\infty \exp\left(\epsilon\frac{(-1)^n}{n!}\text{ad}^n_A(B)+o(\epsilon^2)\right)\right] e^A\,,
% \end{equation}
% where 
% \begin{equation}
%     \text{ad}^n_A(B)= \underbrace{[A,[A,[A,...[A,B]]]]}_{n \text{ times}}\,,\quad \text{ad}^0_A(B):=B\,.
% \end{equation}
% Then we can write
% \begin{equation}
%     e^{A+\epsilon B}=\prod_{n=1}^\infty \left( \mathbb{1}+ \epsilon\frac{(-1)^n}{n!}\text{ad}^n_A(B)+o(\epsilon^2) \right)e^A = \left( \mathbb{1}+ \epsilon \sum_{n=0}^{\infty}\frac{(-1)^n}{n!}\text{ad}^n_A(B)+o(\epsilon^2) \right)e^A\,,
% \end{equation}
% Now, we use two properties of the adjoint actions $\text{Ad}_g$, $\text{ad}_A$ of a Lie group and its corresponding Lie algebra, namely that $\text{Ad}_{\exp(A)}= \exp\left(\text{ad}_A\right)$, and $\text{Ad}_g(A)=gAg^{-1}$. This allows us to write
% \begin{equation}
%     e^{A+\epsilon B} = \Big(\mathbb{1} + \epsilon e^{-\text{ad}_A}B +o(\epsilon^2)\Big)e^A = \Big(\mathbb{1} + \epsilon \text{Ad}_{\exp(-A)}B+o(\epsilon^2)\Big) e^A = \Big(\mathbb{1} + \epsilon\, e^{-A} B e^{A}+o(\epsilon^2)\Big) e^A
% \end{equation}
An expansion in powers of $\epsilon$ to first order yields
\begin{equation}
    D(\alpha,\theta)=D(\alpha) + \epsilon\, \int_0^1 d\tau D(\tau \alpha) \sum_{\mu}(\theta^*_\mu Y_\mu - \theta_\mu Y_\mu^\dagger ) D((1-\tau)\alpha)+o(\epsilon^2)\,,
\end{equation}
which, by defining
\begin{equation}
    \Pi(\alpha,\theta):=\sum_\mu  \int_0^1 d\tau D(\tau \alpha) (\theta_\mu^*Y_\mu - \theta_\mu Y_\mu^\dagger) D^\dagger (\tau \alpha)=-\Pi^\dagger (\alpha,\theta)\,,
\end{equation}
can be rewritten as
\begin{equation}
    D(\alpha,\theta)=\Big(\mathbb{1} + \epsilon\, \Pi(\alpha.\theta)\Big) D(\alpha)\,.
\end{equation}
The expression above tells us that, to first order in $\epsilon$, the GCS of the non-linear oscillator can always be constructed as
\begin{equation}
    \ket{\alpha,\theta} = \Big(\mathbb{1} + \epsilon\, \Pi(\alpha.\theta)\Big)\ket{\alpha} + o(\epsilon^2)\,.
\end{equation}

The Hamiltonian function, evaluated on the coherent states $\ket{\alpha,\theta}$, has the expression
\begin{equation}
    H(\alpha,\theta) = \omega_0|\alpha|^2+\epsilon\bra{\alpha} Y \ket{\alpha}^2 + \epsilon \bra{\alpha} \big[a^\dagger a, \Pi\big]\ket{\alpha} + o(\epsilon^2)\,.
\end{equation}
Using the fact that $\bra{\alpha} Y \ket{\alpha}=e^{-2|\alpha+\beta|^2}$, together with the definition
\begin{equation}
    \Omega(\alpha.\theta):= \bra{\alpha} \big[a^\dagger a, \Pi\big]\ket{\alpha}= 2 \Re\Big( \alpha^* \bra{\alpha} a \Pi \ket{\alpha}\Big)\,,
\end{equation}
the equations of motion will have the form
\begin{equation}
\begin{split}
    & \dot{\alpha} = -i \Big( \omega_0\alpha - 4\epsilon(\alpha+\beta)e^{-4|\alpha+\beta|^2}+ \epsilon\frac{\partial \Omega(\alpha,\theta) }{\partial \alpha^*}  \Big)+o(\epsilon^2)\,,\\ &\dot{\theta}_\mu = -i \epsilon\frac{\partial \Omega(\alpha,\theta) }{\partial \theta^*_\mu}+o(\epsilon^2)\,.   
\end{split}
\label{eq:hamilton_eqs}
\end{equation}
Therefore, we need to compute $\Omega$ and its derivatives. First, let us consider the resolution of the identity 
\begin{equation}
    \mathbb{1}=\int d\mu(\alpha) \ket{\alpha}\bra{\alpha}\,,
    \label{eq:res_identity}
\end{equation}
where the integration measure is given by
\begin{equation}
    d\mu(\alpha)= \frac{d\alpha \wedge d\alpha^*}{2\pi i} e^{-|\alpha|^2} = \frac{d x \wedge dy }{\pi} e^{-(x^2+y^2) }\,.
\end{equation}
By making use of Eq.\ref{eq:res_identity} we can rewrite $\Omega(\alpha,\theta)$ as
\begin{equation}
    \Omega(\alpha,\theta) = \bra{\alpha}\big[a^\dagger a, \Pi\big]\ket{\alpha} = \int d\mu(\alpha_1) \Big(\alpha_1\alpha^* \bra{\alpha_1}\Pi\ket{\alpha} + {\alpha_1}^*\alpha \bra{\alpha_1}\Pi\ket{\alpha}^* \Big) = 2 \Re \Big( \int d\mu(\alpha_1)\, \alpha_1\alpha^* \bra{\alpha_1}\Pi\ket{\alpha} \Big)\,,
\end{equation}
and since
\begin{equation}
     \bra{\alpha_1}\Pi\ket{\alpha_2} = e^{-2|\beta|^2} e^{2\beta (\alpha_2-{\alpha_1}^*)} e^{-\frac{1}{2}(|\alpha_1|^2+|\alpha_2|^2+2{\alpha_1}^*\alpha_2)}\sum_\nu \Big[ \theta_\nu^* {\alpha_1^*}^\nu - \theta_\nu \alpha_2^\nu \Big]\,,
\end{equation}
we get
\begin{equation}
\begin{split}
     &\Omega(\alpha,\theta) = 2e^{-2|\beta|^2} \sum_\nu \Re \left[ \alpha^* e^{-2\beta\alpha} e^{-\frac{1}{2}|\alpha|^2 }  \int d\mu(\alpha_1) \left( \theta_\nu^* \alpha_1{\alpha_1^*}^\nu - \theta_\nu\alpha^\nu \right) e^{-2\beta\alpha_1^*} e^{-\frac{1}{2}(|\alpha_1|^2 + 2\alpha_1^*\alpha)}  \right]\\ 
    & =  2e^{-2|\beta|^2} \sum_\nu \Re \left[ \alpha^* e^{-2\beta\alpha} e^{-\frac{1}{2}|\alpha|^2 }  \Big( \theta_\nu^* I_\nu(\alpha) - \theta_\nu\alpha^\nu I_0(\alpha) \Big) \right]\,,
\end{split}
\end{equation}
where we have defined
\begin{equation}
    I_\nu(\alpha) = \int d\mu(\alpha_1) \alpha_1{\alpha_1^*}^\nu e^{-2\beta\alpha_1^*} e^{-\frac{1}{2}(|\alpha_1|^2 + 2\alpha_1^*\alpha)}\,.
    \label{eq:integral_nu}
\end{equation}
The family of integrals above turn out to be exactly solvable, by deploying techniques of complex contour integration and the Cauchy's residue theorem, as described in the following

\begin{lemma}
Given $\nu\in\mathbb{Z}$ and $b,c\in\mathbb{C}$, if $\Re(b)>0$, it is
    $$ I_\nu (b,c)=\int_\mathbb{C} d\mu(\alpha)\,{\alpha^*}^\nu  \alpha\, e^{-\alpha^*( b\alpha - c )} = \frac{1}{(b+1)^2}\Big( \delta_{\nu,0}\, c + \delta_{\nu,1} \Big) .$$
\end{lemma}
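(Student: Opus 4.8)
The plan is to collapse the integral over $\mathbb{C}$ onto an elementary radial integral multiplied by a one-dimensional angular integral that is handled by Cauchy's residue theorem. First I would write $\alpha=\rho e^{i\phi}$ with $\rho\ge 0$, $\phi\in[0,2\pi)$, and use the explicit measure $d\mu(\alpha)=\frac{1}{\pi} e^{-|\alpha|^2}\,dx\,dy=\frac{1}{\pi} e^{-\rho^2}\rho\,d\rho\,d\phi$. Absorbing the factor $e^{-b\alpha^*\alpha}$ coming from $e^{-\alpha^*(b\alpha-c)}$ into the Gaussian of the measure produces the weight $e^{-(b+1)\rho^2}$, while ${\alpha^*}^\nu\alpha=\rho^{\nu+1}e^{i(1-\nu)\phi}$ and the remaining exponential factor is $e^{c\alpha^*}=e^{c\rho e^{-i\phi}}$. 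This turns the integral into
\begin{equation}
I_\nu(b,c)=\frac{1}{\pi}\int_0^\infty d\rho\,\rho^{\nu+2}e^{-(b+1)\rho^2}\int_0^{2\pi}d\phi\,e^{i(1-\nu)\phi}e^{c\rho e^{-i\phi}}\,.
\end{equation}

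Next I would evaluate the inner angular integral by a contour argument. Setting $z=e^{i\phi}$ rewrites it as $\frac{1}{i}\oint_{|z|=1}dz\,z^{-\nu}e^{c\rho/z}$, whose integrand is holomorphic on $\mathbb{C}\setminus\{0\}$ with a single isolated singularity at the origin and Laurent expansion $z^{-\nu}\sum_{k\ge 0}\frac{(c\rho)^k}{k!}z^{-k}=\sum_{k\ge 0}\frac{(c\rho)^k}{k!}z^{-\nu-k}$. The coefficient of $z^{-1}$ equals $\frac{(c\rho)^{1-\nu}}{(1-\nu)!}$ when $\nu\le 1$ (take $k=1-\nu\ge 0$) and is absent, hence $0$, when $\nu\ge 2$. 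By the residue theorem the angular integral therefore equals $2\pi\,\frac{(c\rho)^{1-\nu}}{(1-\nu)!}$ for $\nu\le 1$ and vanishes for $\nu\ge 2$; equivalently one may expand $e^{c\rho e^{-i\phi}}$ and use $\int_0^{2\pi}e^{im\phi}\,d\phi=2\pi\delta_{m,0}$.

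For $\nu\le 1$ the remaining radial integral is then $\frac{2c^{1-\nu}}{(1-\nu)!}\int_0^\infty d\rho\,\rho^{3}e^{-(b+1)\rho^2}$, and the substitution $u=\rho^2$ gives $\int_0^\infty d\rho\,\rho^3 e^{-(b+1)\rho^2}=\frac{1}{2(b+1)^2}$, a step that converges precisely because $\Re(b)>0$. Collecting the prefactors yields $I_\nu(b,c)=\frac{c^{1-\nu}}{(1-\nu)!\,(b+1)^2}$ for $\nu\le 1$ and $I_\nu(b,c)=0$ for $\nu\ge 2$; specialising to $\nu=1$ and $\nu=0$ gives $\frac{1}{(b+1)^2}$ and $\frac{c}{(b+1)^2}$, i.e.\ exactly $\frac{1}{(b+1)^2}\big(\delta_{\nu,0}\,c+\delta_{\nu,1}\big)$ on the range $\nu\ge 0$ that enters $\Omega(\alpha,\theta)$.

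The computation is routine, so there is no deep obstacle. The one point that needs care is the interchange of the Laurent/Taylor summation with the radial integration, which is justified by dominated convergence: $\sum_k|c\rho|^k/k!=e^{|c|\rho}$ and $\int_0^\infty d\rho\,\rho^{\nu+2}e^{-\Re(b+1)\rho^2}e^{|c|\rho}<\infty$ for every $c$, since the Gaussian dominates the linear exponential. The hypothesis $\Re(b)>0$ is used exactly here and in the convergence of the radial Gaussian; no contour deformation in $\alpha$ or analyticity-in-$b$ argument is required.
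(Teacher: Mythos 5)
Your proof is correct and follows essentially the same route as the paper's: polar coordinates, evaluation of the angular integral by the residue theorem (your $z=e^{i\phi}$ contour is the paper's $\alpha=e^{-i\theta}$ contour under inversion $z\mapsto 1/z$), and the elementary Gaussian radial integral. Your dominated-convergence justification and your explicit restriction of the final $\delta$-formula to $\nu\ge 0$ are, if anything, slightly more careful than the paper, whose claim for all $\nu\in\mathbb{Z}$ actually fails for negative $\nu$ (your own general formula gives, e.g., $I_{-1}=c^2/[2(b+1)^2]\neq 0$), though only $\nu\ge 0$ is ever used in the application.
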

\begin{proof}
By making use of the definition of the measure $d\mu(\alpha)$ we can turn $I_\nu(b,c)$ from an integral over the complex plane to an integral over $\mathbb{R}^2$ as following:
\begin{equation}
        I_\nu(b,c) = \iint_{\mathbb{C}} \frac{d\alpha\wedge d\alpha^*}{2\pi i} e^{-(b+1)|\alpha|^2} {\alpha^*}^\nu  \alpha\, e^{c \alpha^*} = \iint_{\mathbb{R}^2} \frac{dx\wedge dy}{\pi} e^{-(b+1)(x^2+y^2)} (x-i y)^\nu (x+i y) e^{c(x-i y)}\,.
\end{equation}
Now, by introducing polar coordinates $\rho\in[0,\infty)$ and $\theta\in[0,2\pi]$ we can write
\begin{equation}
    I_\nu(b,c) = \frac{1}{\pi} \int_0^\infty d\rho \int_0^{2\pi} d\theta \, e^{i(1-\nu)\theta}\rho^{\nu+2}e^{-(b+1)\rho^2} e^{c \rho e^{-i\theta}} = \frac{1}{\pi} \int_0^\infty d\rho\, I_\nu(\rho;b,c)\, \rho^{\nu+2} e^{-(b+1)\rho^2}\,,
    \label{eq:int_nu_polar}
\end{equation}
where
\begin{equation}
    I_\nu(\rho;b,c) = \int_0^{2\pi} d\theta \, e^{i(1-\nu)\theta} e^{c \rho e^{-i\theta}} = -i\oint_{C_0(1)} d\alpha \, \alpha^{\nu-2} e^{c\rho \alpha}
\end{equation}
and $C_{\alpha_0}(R)$ denotes the circle in the complex plane with radius $R$ centered at $\alpha=\alpha_0$. Now, the integral above can be easily computed with complex contour integration techniques. In particular, by invoking Cauchy's residue theorem we can compute $I_\nu(\rho;b,c)$ as a sum of residues at its poles, the existence of which depends on the value of $\nu$. For $\nu=0$ the integrand has a pole of order $2$ in $\alpha=0$, for $\nu=1$ a simple pole, and for $\nu>1$ the integrand is an entire function and thus has no poles. Therefore we are guaranteed that $I_\nu(\rho;b,c)=0$ for all $\nu>1$. We have
\begin{equation}
    I_\nu(\rho;b,c) =\begin{cases}
         2\pi \, \underset{\alpha=0}{\text{Res}}\,\Big(\alpha^{\nu-2}e^{c\rho\alpha}\Big) & \text{if } \nu\leq 1\,, \\ 0 & \text{if } \nu>1\,.
    \end{cases}
\end{equation}
Now, the residue at $\alpha=0$ can be computed as
\begin{enumerate}
        \item $\bm{\nu=1}$\\
        In this case the pole at $\alpha_0$ is simple and therefore we have
        \begin{equation}
            \underset{\alpha=0}{\text{Res}}\,\Big(\alpha^{\nu-2}e^{c\rho\alpha}\Big) = \lim_{\alpha\to 0} e^{c\rho\alpha} = 1\,;
        \end{equation}
        \item $\bm{\nu=0}$\\
        In this case the pole at $\alpha_0$ is of order $2$ and therefore we have
        \begin{equation}
            \underset{\alpha=0}{\text{Res}}\,\Big(\alpha^{\nu-2}e^{c\rho\alpha}\Big) = \lim_{\alpha\to 0} \frac{d}{d\alpha}e^{c\rho\alpha} = c\rho \lim_{\alpha\to 0} e^{c\rho\alpha} = c\rho\,.
        \end{equation}
    \end{enumerate}
We then have
\begin{equation}
    I_\nu(\rho;b,c) = 2\pi \big( \delta_{\nu,1} + c\rho \,\delta_{\nu,0} \big)\,.
\end{equation}
By inserting this expression in Eq.\eqref{eq:int_nu_polar} we obtain
\begin{equation}
    I_\nu(b,c) = 2 \int_0^\infty d\rho\, \big( \delta_{\nu,1} + c\rho \,\delta_{\nu,0} \big)\, \rho^{\nu+2} e^{-(b+1)\rho^2} = 2 \big( \delta_{\nu,1} + c \delta_{\nu.0}\big)\int_0^\infty d\rho\, \rho^{3}e^{-(b+1)\rho^2}\,,
\end{equation}
and by using the fact that
\begin{equation}
    \int_0^\infty d\rho\, \rho^{n}e^{-s\rho^2} = \frac{1}{2} s^{-\frac{n+1}{2}}\Gamma(\frac{n+1}{2}) \quad\text{if } \Re(s)>0\,,
\end{equation}
we can finally write
\begin{equation}
    I_\nu(b,c) = \frac{\Gamma(2)}{(b+1)^2}\Big( c\,\delta_{\nu,0} + \delta_{\nu,1} \Big) =  \frac{1}{(b+1)^2}\Big( c\,\delta_{\nu,0} + \delta_{\nu,1} \Big)\,.
\end{equation}

\end{proof}

By using the lemma above, we can compute the integrals $I_\nu(\alpha)$ of
\begin{equation}
    I_\nu(\alpha)=I_\nu\Big(\frac{1}{2},-(\alpha+2\beta)\Big)= \frac{4}{9}\Big( \delta_{\nu,1}- (\alpha+2\beta)\delta_{\nu,0} \Big)\,,
\end{equation} 
and we finally arrive at the expression
\begin{equation}
    \Omega(\alpha,\theta) = \frac{4}{9}e^{-2|\beta|^2} e^{-\frac{1}{2}|\alpha|^2} \left\{ e^{-2\beta\alpha}\alpha^*\left[ \theta_1^* - (\alpha+2\beta)\theta_0^* + (\alpha+2\beta)\sum_\nu \theta_\nu\alpha^\nu \right] + h.c. \right\}\,.
\end{equation}
which we shall now use in Eqs. \eqref{eq:hamilton_eqs} to get the solution $\alpha(t),\theta_\mu(t)$. The equation of motion for $\alpha(t)$ can be now further simplified by noting that the correction $\epsilon \frac{\partial\Omega}{\partial\alpha^*}$, is a linear function of the coordinates $\theta_\mu$. This means that such correction is of second order in $\epsilon$ since $\dot{\theta}_\mu\sim o(\epsilon)$. This is just a trivial consequence of the fact that the solutions $\theta_\mu(t)$ have (as they should) vanishing zeroth order term in $\epsilon$ as such coordinates play no role for a linear oscillator. Finally the equations of motion read
\begin{equation}
\begin{split}
    &\dot{\alpha} = -i \Big( \omega_0\alpha - 4\epsilon(\alpha+\beta)e^{-4|\alpha+\beta|^2} \Big) + o(\epsilon^2)\,.\\
     &\dot{\theta}_\mu = -i\epsilon \frac{4}{9} e^{-\frac{1}{2}(|\alpha|^2+4|\beta|^2)}e^{-2\beta\alpha}\alpha^* \Big( \delta_{\mu,1}-(\alpha+2\beta)(\delta_{\mu,0} + \alpha^\mu)  \Big)\,, 
\end{split}
\label{eq:hamilton_final}
\end{equation}

Clearly, an exact solution of Eq. \eqref{eq:hamilton_final} is not feasible. Nonetheless, we can study the behaviour of these corrections in the regime of large and small amplitudes $|\alpha|$. \\

\subsection{Large amplitudes}
For $|\alpha|\gg 1$, the corrections to the the linear dynamics vanish exponentially in $|\alpha|^2$, so that the equations of motion reduce to
\begin{equation}
\begin{split}
    & \dot{\alpha} \approx -i \omega_0\alpha \,,\\ &\dot{\theta}_\mu \approx 0\,.   
\end{split}
\label{eq:hamilton_eqs_large}
\end{equation}
so that the coordinates $\theta_\mu$ become cyclic and the equation for $\alpha$ reproduces the equation of motion for a linear harmonic oscillator.\\
If we are interested in the deviations from the linear behaviour in powers of $e^{-|\alpha|^2}$, we can focus on the equation for $\alpha$ which for large amplitudes becomes
\begin{equation}
    \dot{\alpha} \approx -i \Big( \omega_0 - 4\epsilon e^{-4|\alpha|^2}   \Big)\alpha\,,
\end{equation}
for small times the evolution of a coherent state with amplitude $\alpha(t_0)=\alpha_0$ will then be
\begin{equation}
    \alpha(t_0+dt) = \alpha_0 + \dot{\alpha}(t_0)dt + o(dt^2) = \alpha_0 -i ( \omega_0 - 4\epsilon e^{-4|\alpha_0|^2} ) \alpha_0 dt + o(dt^2) \approx \alpha_0 e^{-i(\omega_0-4\epsilon e^{-4|\alpha_0|^2})dt}
\end{equation}
which gives us a new frequency
\begin{equation}
    \omega \approx \omega_0 - 4\epsilon e^{-4|\alpha_0|^2}\,,
\end{equation}
corresponding to the frequency shift
\begin{equation}
    \frac{\delta\omega_0}{\omega_0} = -\frac{4\epsilon}{\omega_0} e^{-4|\alpha_0|^2}\,.
\end{equation}
We can then conclude that for large amplitudes the frequency shift is negative, as confirmed by the numerical simulations in Fig.\ref{fig:osc_freq}. Furthermore, the frequency shift vanishes exponentially in the initial amplitude $|\alpha_0|$.

\subsection{Small amplitudes}
In order to study the opposite regime, i.e. when $|\alpha|\ll 1$, we perform the following approximation. We take a first order expansion of Eq.\eqref{eq:hamilton_final} in $\alpha$ and $\alpha^*$, allowing us to write
\begin{equation}
    \dot{\theta}_0 \approx  +i\epsilon \frac{8}{9} e^{-2|\beta|^2}\alpha^* \beta\,, \quad  \dot{\theta}_1 \approx  -i\epsilon \frac{4}{9} e^{-2|\beta|^2}\alpha^*\,, \quad\text{and}\quad  \dot{\theta}_\mu \approx 0 \quad \forall \mu\geq 2\,.
\end{equation}
The equations above have the following implications: first, all variables $\theta_\mu$ with $\mu\neq0,1$ are cyclic for small amplitudes and therefore when starting from a coherent state with $\theta_\mu= 0$, $\mu\neq 0,1$, we have $\theta_\mu(t)\equiv 0\,\forall t$. As per $\theta_0$ and $\theta_1$, the corresponding equations of motion contains $\alpha^*$ in a term that is proportional to $\epsilon$. By expanding $\alpha$ in powers of $\epsilon$ we have the expression
\begin{equation}
    \alpha(t) = \alpha_l(t) ( 1 + \epsilon \,\eta(t) ) + o(\epsilon^2)\,,
\end{equation}
where $\alpha_l(t)=\alpha_0 e^{-i\omega_0 t}$ is the solution for a linear  ($\epsilon=0$) harmonic oscillator. Therefore in order to be consistent with a first order treatment in $\epsilon$, we should write
\begin{equation}
\begin{split}
        &\dot{\theta}_0 = +i\epsilon \frac{8}{9} e^{-2|\beta|^2} \beta\alpha^*_0 e^{i\omega_0 t} + o(\epsilon^2) \quad \implies \quad\theta_0(t) = \frac{8}{9} \frac{\epsilon}{\omega_0}\beta e^{-2|\beta|^2}\alpha_0^* \big( e^{i\omega_0 t} -1 \big)\,;\\
        &\dot{\theta}_1 = -i\epsilon \frac{4}{9} e^{-2|\beta|^2}\alpha^*_0 e^{i\omega_0 t} + o(\epsilon^2) \qquad\quad\quad \implies \quad\theta_1(t) = -\frac{4}{9} \frac{\epsilon}{\omega_0} e^{-2|\beta|^2}\alpha_0^* \big( e^{i\omega_0 t} -1 \big)\,.
\end{split}
\label{eq:thetas}
\end{equation}
As per $\alpha(t)$, we go back to the corresponding equation of motion and by performing again an expansion in $\alpha,\alpha^*$ to first order we arrive at
\begin{equation}
    \dot{\alpha} = -i \Big[ \big(\omega_0 - 4\epsilon e^{-4|\beta|^2}(1-4|\beta|^2) \big)\alpha + 4\epsilon |\beta|^2 e^{-4|\beta|^2}\alpha^* - 4\epsilon \beta e^{-4|\beta|^2}  \Big]\,.
    \label{eq:alpha_small}
\end{equation}
With the definitions 
\begin{equation}
    w = 4 e^{-4|\beta|^2}(1-4|\beta|^2)\,,\quad p = 16|\beta|^2 e^{-4|\beta|^2}\,, \quad \kappa = 4 |\beta| e^{-4|\beta|^2}\,,
\end{equation}
we can write Eq.\eqref{eq:alpha_small} (together with the corresponding equation for $\alpha^*$) in the following form:
\begin{equation}
\begin{split}
    \dot{\alpha} & = -i \Big( (\omega_0-\epsilon w)\alpha + \epsilon p\alpha^* \Big) - \epsilon \kappa\,,\\ \dot{\alpha}^* & = +i \Big( (\omega_0-\epsilon w)\alpha^* + \epsilon p\alpha  \Big) - \epsilon \kappa\,.   
\end{split}
\label{eq:smallampl}
\end{equation}
The equations above have solution
\begin{equation}
    \alpha(t)=\alpha_0 e^{-i t (\omega_0-w \epsilon )}+ \frac{\epsilon}{\omega_0} \Big[ i \kappa  \big(1-\cos (t (\omega_0-w \epsilon ))\big)-(\kappa +i \alpha_0 p) \sin (t (\omega_0-w \epsilon )) \Big]
\end{equation}
which is a periodic function with frequency $\omega = \omega_0 - \epsilon w$, and a corresponding relative frequency correction
\begin{equation}
    \frac{\delta \omega_0}{\omega_0} = - \frac{\epsilon w}{\omega_0} =  4\frac{\epsilon}{\omega_0} e^{-4|\beta|^2}(4|\beta|^2-1)\,.
\end{equation}

\subsubsection{Spatial dynamics}

In order to find an expression for the new ground state and the corrections to the frequency of oscillations near the ground state, we study the spatial dynamics of the states $\ket{\alpha,\theta}$. From now on, since for small amplitude there are only two non-cyclic theta variables, we introduce the notation $\theta\equiv\{\theta_0,\theta_1\}$ and write
\begin{equation}
\begin{split}
    & \langle x(t) \rangle = \bra{\alpha(t),\theta(t)} x \ket{\alpha(t),\theta(t)} = \bra{\alpha}\big( \mathbb{1} -\epsilon \Pi(\alpha(t),\theta(t)) \big)\, x\,\big(\mathbb{1} -\epsilon \Pi(\alpha(t),\theta(t))\big)\ket{\alpha} \\ & = \bra{\alpha(t)}x\ket{\alpha(t)} + \epsilon \bra{\alpha(t)} [x,\Pi(\alpha(t),\theta(t)] \ket{\alpha(t)} + o(\epsilon^2)  \,,
\end{split}
\end{equation}
as before, since $\theta(t)$ is already of first order in $\epsilon$ we should write
\begin{equation}
    \langle x(t) \rangle = \Re(\alpha(t)) + o(\epsilon^2)\,.
\end{equation}
Surprisingly this tells us that, up to second order terms in $\epsilon$, the spatial dynamics of the oscillator is unaffected by the time evolution of $\theta$, and only depends on the dynamics in the complex plane $\alpha$. 
By using the solution found in the previous section, we can write the spatial dynamics for small amplitudes as
\begin{equation}
     \langle x(t) \rangle = \alpha_0 \cos\big((\omega_0-w \epsilon)t\big) - \frac{\epsilon k}{\omega_0} \sin\big((\omega_0-w \epsilon)t\big)
\end{equation}
We can thus compute the new equilibrium position $\alpha_v$ by setting $\dot{\alpha}=\dot{\alpha}^*=0$, which gives
\begin{equation}
    \alpha_v = \frac{i\kappa}{\omega_0 - p-w} = \frac{i\kappa}{\omega_0} + o(\epsilon^2) =i 4|\beta|\frac{\epsilon}{\omega_0}e^{-4|\beta|^2}
\end{equation}

\subsection{Ground state}
From the discussion above we can conclude that the coherent state $\alpha=0$ is not a fixed point of the dynamics anymore, and therefore the state $\ket{0}$ is not the ground state of the system. However, we have found an expression for the new coherent state amplitude corresponding to the new spatial fixed point, and the state $\ket{\alpha_v}\equiv\ket{\alpha_v,0,0}$ turns out to be the perturbed ground state up to second order terms in $\epsilon$, as its time evolution is
\begin{equation}
   \ket{\alpha_v,0} \longmapsto^t \ket{\alpha_v,\theta^v(t)}\,, \quad\text{where}\quad \theta_v(t)= \{\theta_0^v(t),\theta_1^v(t)\} \,,
\end{equation}
where, from Eq.\ref{eq:thetas} we know that
\begin{equation}
    \theta_0^v(t) = \frac{8}{9} \frac{\epsilon}{\omega_0}(2\beta-1) e^{-2|\beta|^2}\alpha_v^* \big( e^{i\omega_0 t} -1 \big)\,, \quad\text{and}\quad \theta_1^v(t)=-\frac{8}{9} \frac{\epsilon}{\omega_0} e^{-2|\beta|^2}\alpha_v^* \big( e^{i\omega_0 t} -1 \big)\,.
\end{equation}
Now, from the expressions above we note that both $\theta_0^v$ and $\theta_1^v$ are proportional to $\alpha_v^*$, which is proportional to $\epsilon$ meaning that to first order in $\epsilon$ we can write
\begin{equation}
   \ket{\alpha_v,0} \longmapsto^t \ket{\alpha_v,0)}\,.
\end{equation}

This means that the frequency of spatial oscillations of a coherent state $\ket{\alpha_0}$ for small amplitudes will be $\omega_0 - w$, regardless of $\theta(t)$, and this result reproduces exactly the shifts observed in numerical simulations for small displacements of the true ground state (see Fig.\ref{fig:osc_freq}).

\end{document}